\newtheorem{theorem}{Theorem}[section]
\newcommand{\vecP}{\vec{P}}
\renewcommand{\L}{\mathcal{L}}
\begin{document}

\begin{CJK*}{UTF8}{bsmi}

\title{Hardy-type paradoxes for an arbitrary symmetric bipartite Bell scenario}

\author{Kai-Siang Chen}
\affiliation{Department of Physics and Center for Quantum Frontiers of Research \& Technology (QFort), National Cheng Kung University, Tainan 701, Taiwan}

\author{Shiladitya Mal}
\affiliation{Department of Physics and Center for Quantum Frontiers of Research \& Technology (QFort), National Cheng Kung University, Tainan 701, Taiwan}
\affiliation{Physics Division, National Center for Theoretical Sciences, Taipei 10617, Taiwan}
\affiliation{Centre for Quantum Science and Technology, Chennai Institute of Technology, Chennai 600069, India}

\author{Gelo Noel M. Tabia}
\affiliation{Department of Physics and Center for Quantum Frontiers of Research \& Technology (QFort), National Cheng Kung University, Tainan 701, Taiwan}
\affiliation{Physics Division, National Center for Theoretical Sciences, Taipei 10617, Taiwan}

\author{Yeong-Cherng Liang}
\email{ycliang@mail.ncku.edu.tw}
\affiliation{Department of Physics and Center for Quantum Frontiers of Research \& Technology (QFort), National Cheng Kung University, Tainan 701, Taiwan}
\affiliation{Physics Division, National Center for Theoretical Sciences, Taipei 10617, Taiwan}

\begin{abstract}
As with a Bell inequality, Hardy's paradox manifests a contradiction between the prediction given by quantum theory and local-hidden variable theories. 
In this work, we give two generalizations of Hardy's arguments for manifesting such a paradox to an {\em arbitrary}, but symmetric Bell scenario involving two observers. Our constructions recover that of Meng {\em et al.} [Phys. Rev. A. {\bf 98}, 062103 (2018)] and that first discussed by Cabello [Phys. Rev. A {\bf 65}, 032108 (2002)] as special cases. Among the two constructions, one can be naturally interpreted as a demonstration of the failure of the transitivity of implications (FTI). Moreover, a special case of which is equivalent to a ladder-proof-type argument for Hardy's paradox. Through a suitably generalized notion of success probability called degree of success, we provide evidence showing that the FTI-based formulation exhibits a higher degree of success compared with all other existing proposals. Moreover, this advantage seems to persist even if we allow imperfections in realizing the zero-probability constraints in such paradoxes. Explicit quantum strategies realizing several of these proofs of nonlocality without inequalities are provided. 
\end{abstract}
\date{\today}
\maketitle

\section{Introduction}

In the thought-provoking paper by Einstein, Podolksy, and Rosen~\cite{EPR35}, the strong correlations between measurement outcomes have led them to suspect that quantum theory could be somehow completed (with additional variables). This was eventually shown to be untenable by Bell~\cite{Bell64}, who proved that {\em no} local-hidden-variable (LHV) models can reproduce all quantum-mechanical predictions. In particular, he demonstrated how, with the help of so-called Bell inequalities, one can experimentally falsify the predictions of LHV models. Nowadays, we know that Bell nonlocality not only opens the door to answer fundamental questions in physics but also serves as an important resource for device-independent quantum information~\cite{Brunner_RevModPhys_2014,Scarani_DIQI_12}.

Interestingly, Bell inequalities are not the only way to manifest Bell nonlocality~\cite{Brunner_RevModPhys_2014}. Indeed, Greenberger, Horne, and Zeilinger (GHZ)~\cite{Greenberger:1989aa} showed in their seminal work that a logical contradiction can be demonstrated between the quantum mechanical prediction on a four-qubit GHZ state and that of {\em any} deterministic LHV model (DLHVM). Soon after, such a contradiction was also provided for a three-qubit GHZ state~\cite{GHSZ} and a two-qubit singlet state~\cite{Peres:1990ab}. This last construction, in particular, was adapted to give the well-known Peres-Mermin game~\cite{Mermin_90} for showing quantum pseudo-telepathy.

A common feature of these logical proofs is that they rely strongly on the perfect correlation of {\em maximally entangled} states. In contrast, Hardy~\cite{Hardy92} provided a different type of logical proof of ``nonlocality without inequality'' for a {\em partially entangled} two-qubit state. In Hardy's proof, a contradiction comes about only when a certain event is observed, see \cref{LogicalStructure_Hardy_Stapp}. The probability at which this event occurs is thus commonly called the {\em success probability}, as it facilitates (initiates) the chain of logical reasoning in Hardy's arguments.
 
Hardy's original proof was soon generalized to cater for certain bipartite quantum states of {\em arbitrary} local Hilbert space dimension~\cite{Clifton92}, an arbitrary number of qubit systems~\cite{Pagonis92} (see also~\cite{Jiang:PRL:2018,Luo:2018aa}), an arbitrary {\em partially entangled} two-qubit state~\cite{Hardy93}, and later to an experimental scenario involving an arbitrary number of binary-outcome measurement settings~\cite{Boschi97}. In the meantime, Stapp's reformulation~\cite{Stapp93} of Hardy's argument (which leads to the so-called Hardy paradox) made clear~\cite{Liang:PRep} that the paradox can also be interpreted as the failure of the transitivity implications (FTI), thereby demonstrating Bell nonlocality.

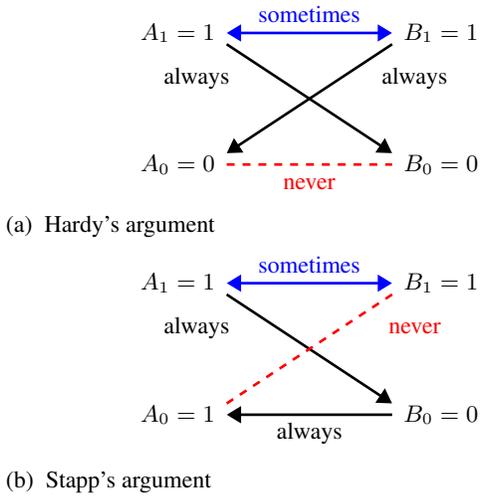
\begin{figure}[h!tbp]
\captionsetup{justification=RaggedRight,singlelinecheck=off}
\begin{subfigure}{0.45\textwidth}
\centering
\begin{tikzpicture}
    \node[] at (0,1.75) {$A_1=1$};
    \node[] at (3.5,1.75) {$B_1=1$};
    \node[] at (0,0) {$A_0=0$};
    \node[] at (3.5,0) {$B_0=0$};
    \node[blue] at (1.75,2) {sometimes};
    \node[] at (3.15,1.15) {always};
    \node[] at (0.25,1.15) {always};
    \node[red] at (1.75,-0.25) {never};
    \draw[Triangle-Triangle,blue,line width=1pt] (0.65,1.75)--(2.85,1.75);
    \draw[-Triangle,line width=1pt] (0.65,1.6)--(2.85,0.15);
    \draw[-Triangle,line width=1pt] (2.85,1.6)--(0.65,0.15);
    \draw[dashed,red,line width=1pt] (0.65,0)--(2.85,0);
\end{tikzpicture}
\caption{\label{Fig:Hardy} Hardy's argument}
\end{subfigure}
\begin{subfigure}{0.45\textwidth}
\centering
\begin{tikzpicture}
    \node[] at (0,1.75) {$A_1=1$};
    \node[] at (3.5,1.75) {$B_1=1$};
    \node[] at (0,0) {$A_0=1$};
    \node[] at (3.5,0) {$B_0=0$};
    \node[blue] at (1.75,2) {sometimes};
    \node[] at (1.75,-0.25) {always};
    \node[red] at (3.15,1.15) {never};
    \node[] at (0.25,1.15) {always};
    \draw[Triangle-Triangle,blue,line width=1pt] (0.65,1.75)--(2.85,1.75);
    \draw[-Triangle,line width=1pt] (0.65,1.6)--(2.85,0.15);
    \draw[dashed,red,line width=1pt] (2.85,1.6)--(0.65,0.15);
    \draw[Triangle-,line width=1pt] (0.65,0)--(2.85,0);
\end{tikzpicture}
\caption{\label{Fig:Stapp} Stapp's argument}
\end{subfigure}
\centering
\caption{Schematic representation of (a) Hardy's argument and (b) Stapp's reformulation~\cite{Stapp93} for demonstrating the inadequacy of DLHVMs in reproducing quantum predictions. Here, $A_x$ and $B_y$ represent, respectively, the outcome ($0$ or $1$) observed by Alice and Bob when she performs the $x$-the measurement and he performs the $y$-th measurement. In a DLHVM, the implications (black arrows) and the forbidden event (red) imply that the event $A_1=B_1=1$ is also forbidden, yet quantum theory defies this implication.
}
\label{LogicalStructure_Hardy_Stapp}
\end{figure}

Several years later, relaxations of Hardy's original formulation were also proposed. For example, motivated by Kar's observation~\cite{Kar97} that {\em no} mixed two-qubit entangled states exhibit Hardy's paradox, Liang and Li~\cite{Liang03,Liang:2005vl} generalized Hardy's argument by relaxing one of the equality constraints to an inequality constraint (see also\cite{Cabello02}). Indeed, their construction allowed them to demonstrate a Hardy-type logical contradiction for certain mixed two-qubit states via a generalized notion of success probability, called degree of success in~\cite{Rai:PRA:2021}. Subsequently, Kunkri {\em et al.}~\cite{KCA+06} showed that this generalization could give a higher degree of success compared to the original formulation in~\cite{Hardy93}. A brief discussion of a further generalization from Cabello and that of Liang-Li to a scenario with an arbitrary number of measurement settings was subsequently given in~\cite{Cereceda16}.

Indeed, a noticeably higher success probability (or degree of success) can be obtained if we are willing to consider a Bell scenario with more measurement settings~\cite{Boschi97,Cereceda16}, outcomes~\cite{Chen:PRA:2013}, or both~\cite{Meng:PRA:2018}. As with~\cite{Meng:PRA:2018}, in this work, we propose two generalizations of Hardy's arguments applicable to an arbitrary (symmetric) bipartite Bell scenario, which recovers, respectively, that of~\cite{Meng:PRA:2018} and that discussed by~\cite{Liang:2005vl,KCA+06,Cereceda16} as a special case. We provide evidence showing that the one that can be interpreted as a demonstration of FTI leads to a degree of success higher than all those offered by other existing proposals, even in the presence of  imperfections.

\section{Hardy's paradox and its generalization in the simplest Bell scenario }\label{Sec:Hardy:CHSH}

\subsection{Hardy's original formulation}

Consider the simplest Clauser-Horne-Shimony-Holt (CHSH) Bell scenario, i.e., one in which two observers each perform two binary-outcome measurements. 
Let $x$ and $y$ ($a$ and $b$) represent, respectively, the setting/ input (outcome/ output) of Alice and Bob side, and $A_x$ ($B_y$) $=0,1$ denotes the outcome of Alice (Bob) when given input $x$ ($y$) $=0,1$. The probability distribution $\{P(a,b|x,y)=P(A_x,B_y)\}$ admissible in LHV models can be described by convex mixtures of local deterministic strategies $\{A_x=f_A(x,\lambda),B_y=f_B(y,\lambda)\}$, where $f_A$ ($f_B$) is a deterministic function 
of the input $x$ ($y$) and LHV $\lambda$. The Hardy paradox of~\cite{Hardy93} is encapsulated by:
\begin{equation}\label{HardyParadox}
\begin{split}
     &P(0,0|0,0) = 0, \quad
    P(1,1|0,1) = 0,\\
    &P(1,1|1,0) = 0,\quad
    P(1,1|1,1) = q> 0. 
\end{split}
\end{equation}

For DLHVMs, the equality constraints of \cref{HardyParadox} imply $P(1,1|1,1)=0$, which contradicts the inequality constraint of \cref{HardyParadox}. In other words, together with the equality constraints, the occurrence of the event $x=y=a=b=1$ contradicts the prediction of {\em any} DLHVM. Consequently, the quantity $q\equiv P(1,1|1,1)$ is also known as the {\em success probability}.  In quantum theory, it is known~\cite{Rabelo12} that the maximal attainable success probability is $\frac{5\sqrt{5}-11}{2}\approx 9.02\%$. Finally, note that a general LHV model can always be seen mathematically as a convex mixture of DLHVM. Thus, the observation of \cref{HardyParadox} also rules out a general LHV model.

\subsection{Generalization due to Cabello-Liang-Li}

Cabello's~\cite{Cabello02} relaxation of Hardy's argument, originally proposed for a tripartite scenario and subsequently applied in the bipartite scenario by Liang and Li~\cite{Liang:2005vl}, takes the form:
 \begin{equation}\label{CabelloArgument}
 \begin{split}
     P(0,0|0,0) = p,\quad
    P(1,1|0,1) = 0,\\
    P(1,1|1,0) = 0,\quad
    P(1,1|1,1) = q.
    \end{split}
\end{equation}
Hereafter, we refer to this as the Cabello-Liang-Li (CLL) argument. Compared with~\cref{HardyParadox}, we see that in this argument, $P(0,0|0,0)$ is allowed to take nonzero value. From~\cref{Fig:Hardy}, we see that for any DLHVM, if the event $x=y=a=b=1$ occurs, so must the event $x=y=a=b=0$. However, there exist other local deterministic strategies (e.g., one where $a=b=0$ regardless of $x$ and $y$) where the latter event occurs while the former does not. Thus, for the prediction of a general LHV model, we must have $p\equiv P(0,0|0,0) \ge q \equiv P(1,1|1,1)$. In other words, one may take the positive value of the quantity 
\begin{equation}\label{Eq:Success:Cabello}
	q-p = P(1,1|1,1)-P(0,0|0,0)
\end{equation}
as a witness for successfully demonstrating a logical contradiction based on such an argument. In~\cite{KCA+06}, the authors refer to $q-p$ as the success probability of such an argument. However, since $q-p$ is the difference between two conditional probabilities, we shall follow~\cite{Rai:PRA:2021} and refer to $q-p$ instead as the {\em degree of success} (DS). In~\cite{KCA+06}, the authors showed that this  DS can reach $\approx10.79\%$.

\subsection{Our generalization based on FTI}

 Now, let us revisit \cref{HardyParadox} and see how a nonzero value of $q$ can be understood as a failure of the transitivity of implications (FTI), thanks to Stapp's formulation of Hardy's paradox in~\cite{Stapp93}. To this end, note from the two zero-constraints on the left of \cref{HardyParadox} that they entail not only the inferences of~\cref{Fig:Hardy}, but also those of~\cref{Fig:Stapp}, i.e., 
\begin{equation}
	A_1=1\implies B_0=0\implies A_0=1
\end{equation}
Thus, if $q>0$, meaning that the event $A_1=1$ and $B_1=1$ can simultaneously occur with some nonzero probability, and {\em if} implications are {\em transitive} (as in classical logic), it must be the case that, at least {\em sometimes} $B_1=1\implies  A_0=1$. However, this contradicts the remaining zero constraint in~\cref{HardyParadox}, thus manifesting an FTI.

Using this reformulation, we now provide a different relaxation of Hardy's paradox via:
 \begin{equation}\label{FTIArgument}
 \begin{split}
     P(0,0|0,0) = 0,\quad
    P(1,1|0,1) = r,\\
    P(1,1|1,0) = 0,\quad
    P(1,1|1,1) = q,
    \end{split}
\end{equation}
From \cref{Fig:Stapp}, we see that for any DLHVMs, if the event $x=y=a=b=1$ occurs, so must the event for $x=0$ and $y=a=b=1$. However, there are local deterministic strategies where the converse does not hold. Thus, for a general LHV model (obtained by averaging the deterministic ones), we must have $r\equiv P(1,1|0,1) \ge q \equiv P(1,1|1,1)$. Hence, in analogy with the CLL argument, we refer to
\begin{equation}\label{Eq:Success:FTI}
	q-r = P(1,1|1,1)-P(1,1|0,1)
\end{equation}
as the  DS of such an argument. In~\cite{Chen23}, it was shown that in quantum theory, the largest value of $q-r$ is $\frac{1}{8}=12.5\%$, attainable by performing projective measurements on a two-qubit pure state and higher than that achievable with \cref{CabelloArgument}.

\subsubsection{Maximal degree of success for $2$-qubit pure states}

In fact, the general two-qubit {\em pure} state and observables satisfying the zero-probability constraints of \cref{FTIArgument} are~\cite{Chen23}:
\begin{subequations}\label{Eq:Class2b}
\begin{gather}\label{Eq:HardyState}
    \ket{\psi} = \sin\theta\left(\cos{\alpha}\ket{0}-\sin\alpha\ket{1}\right)\!\ket{1}+\cos\theta\ket{1}\!\ket{0},\\
\begin{split}\label{Eq:2b:Observables}	
    A_0=\sigma_z,  \quad
    A_1=\cos{2\alpha}\,\sigma_z-\sin{2\alpha}\,\sigma_x, \\
    B_0=\sigma_z, \quad
    B_1=\cos{2\beta}\,\sigma_z-\sin{2\beta}\,\sigma_x,
\end{split}
\end{gather}
\end{subequations}
where $\theta,\alpha,\beta\in[0,\pi]$. From here, the corresponding  DS of \cref{Eq:Success:FTI,Eq:Class2b}, as a function of the parameters $\theta$, $\alpha$, and $\beta$ can be shown to be:
\begin{equation}\label{Success_Probability}
\begin{split}
    P_{\text{succ}}(\theta,\alpha,\beta) = &\frac{1}{2}\sin{\alpha}\left[\left(\cos{2\beta}\cos{2\theta}-1\right)\sin{\alpha}\right.\\
    &\left.+\sin{2\beta}\sin{2\theta}\right].
\end{split}
\end{equation}
Naturally, one may wonder which entangled two-qubit pure state gives the largest value of $P_{\text{succ}}$. To this end, note that the entanglement of the two-qubit state of \cref{Eq:HardyState}, as measured according to the concurrence \cite{Wootters:PRL:1998}, is
\begin{equation}\label{Eq:Concurrence_Constraint}
    C(\ket{\psi}) = |\sin{2\theta}\cos{\alpha}|.
\end{equation}
Using variational techniques, the largest DS that we have found for given concurrence $C$ is
\begin{equation}\label{Eq:Psucc_FixedTheta}
    P^{*}_{\text{succ}}(C)=\frac{\sqrt{1-C^2}}{2}\left(1-\sqrt{1-C^2}\right),
\end{equation}
for which $\theta=\beta\in\{\frac{\pi}{4},\frac{3\pi}{4}\}$.
From~\cref{fig:Compare_state}, it is clear that for any given concurrence, this DS is always larger that from the CLL argument, which, in turn, is larger than that from Hardy's original formulation. For completeness, we include in \cref{App:DSvsC} the parametric form of the maximal DS as a function of the concurrence for the CLL argument and Hardy's original formulation.

\begin{figure}[h!tbp]
\captionsetup{justification=RaggedRight,singlelinecheck=off}
\centering
  \includegraphics[width=\linewidth]{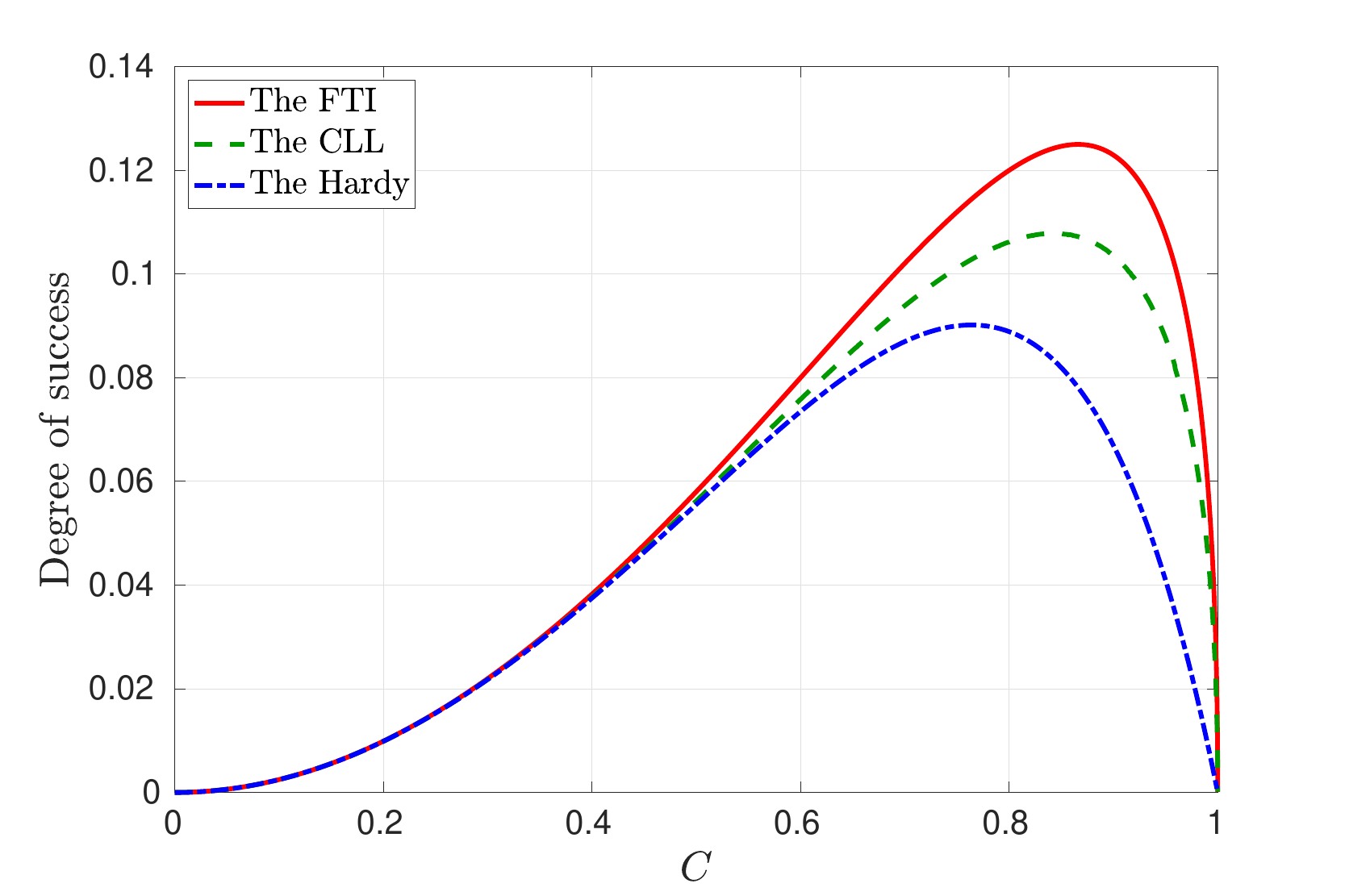}
\caption{The maximal DS (which reduces to the success probability in the Hardy argument) in demonstrating a proof of nonlocality without inequality for given concurrence~\cite{Wootters:PRL:1998}.
From top to bottom, we plot $q-r$ of~\cref{FTIArgument} for our FTI argument (red, solid), $q-p$ of~\cref{CabelloArgument} for CLL's argument (green, dashed, see~\cite{KCA+06}), and $q$ of Eq.~\eqref{HardyParadox} for Hardy's argument (blue, dashed-dotted, see~\cite{Hardy93}). Note that the entangled state that gives the largest DS differs from one formulation to the other.}
\label{fig:Compare_state}
\end{figure}

\subsubsection{FTI argument with  imperfections}

Evidently, imperfections in any realistic experimental scenario make it essentially impossible to realize the zero-probability equality constraints in all these different formulations. To understand the impact of these imperfections, we now {\em relax}~\cref{FTIArgument} and consider
\begin{equation}\label{Eq:noisyFTI}
\begin{split}
        P(0,0|0,0) \le \epsilon,\quad
    P(1,1|0,1) = r,\\
    P(1,1|1,0) \le \epsilon,\quad
    P(1,1|1,1) = q,   
\end{split}
\end{equation}
where $\epsilon$ is the tolerance from a deviation of the zero-probability equality constraints.\footnote{Note that $\epsilon$ should {\em not} be understood as an uncertainty in the estimation of the conditional probability; otherwise, similar tolerance should also be included in the expression for $P(1,1|0,1)$ and $P(1,1|1,1)$.}
Then, the maximal DS allowed in an LHV theory satisfies:
\begin{equation} \label{Eq:noisyFTI_LocalBound}
    q-r -2\epsilon= P(1,1|1,1) - P(1,1|0,1)-2\epsilon\overset{\L}{\le} 0,
\end{equation}
which follows from the following rewriting~\cite{Rai:PRA:2021,Rabelo12} of the Clauser-Horne Bell inequality~\cite{Clauser74}:
\begin{equation}\label{Ineq:CH}
    P(1,1|1,1)-P(1,1|1,0)-P(1,1|0,1)-P(0,0|0,0)\overset{\L}{\le} 0.
\end{equation}
From \cref{Eq:noisyFTI_LocalBound}, we see that when an $\epsilon$-deviation from the zero-probability constraints is allowed, it is expedient to consider, instead, $q-r-2\epsilon$ as the {\em generalized} DS. That is, whenever this quantity is nonzero, we again find a contradiction with the prediction given by all LHV theories.

For the CLL argument, a similar discussion has been made in \cite{Rai:PRA:2021}, giving rise to a generalized DS of $q-p-2\epsilon$, see~\cref{CabelloArgument}. In \cref{fig:Compare_error}, we show the corresponding maximal generalized DS (MGDS), i.e., the largest value of $q-r-2\epsilon$ from~\cref{Eq:noisyFTI} and $q-p-2\epsilon$ from~\cref{CabelloArgument} when the tolerance $\epsilon\in (0,0.5)$. Our results clearly show that for any amount of tolerance in this range, the FTI argument generally gives a somewhat higher MGDS  compared to the CLL one. Moreover, as we see from~\cref{fig:Compare_error}, the MGDS for both arguments increases when $\epsilon$ increases from zero up until some critical value. Qualitatively, we can appreciate this by noting that when the zero-probability constraints are slightly relaxed, the range of available nonlocal quantum strategies also increases. Nonetheless, when $\epsilon$ is sufficiently large, we see  from \cref{Eq:noisyFTI_LocalBound} that a nonzero generalized DS must involve a $P(1,1|1,1)$ close to unity and a $P(1,1|0,1)$ close to 0, i.e., the corresponding $\vecP$ must become close to that producible by a DLHVM, thereby resulting in a decrease in MGDS.

Note that for our relaxed FTI argument of \cref{Eq:noisyFTI}, upper bounds on the MGDS (from level-3 of the semidefinite programming (SDP) hierarchy introduced in \cite{Moroder13}) coincide with the lower bounds (based on a two-qubit pure state with rank-1 projective measurements) to within a numerical precision of $10^{-7}$. However, for the CLL argument, if we consider only qubit strategies, then as already noted in \cite[Fig.~1]{Rai:PRA:2021},  there appears to be a gap between the MGDS achievable with such strategies and the SDP upper bound (for $\epsilon_1\lesssim \epsilon < \epsilon_2$ where $\epsilon_1=0.158, \epsilon_2=0.5$). Upon closer inspection, we find that for every $\epsilon$ in this interval, the SDP upper bound on MGDS is attainable to within the same precision by considering an appropriate convex mixture of the qubit strategy for  $\epsilon = \epsilon_1$ and $\epsilon = \epsilon_2$, or equivalently a ququart strategy obtained from their direct sum. 
That is, we can saturate the SDP upper bound for $\epsilon=p\epsilon_1+(1-p)\epsilon_2$ by mixing the quantum strategies for $\epsilon = \epsilon_1$ and $\epsilon = \epsilon_2$, respectively, with weight $p$ and $1-p$ while fulfilling all other constraints, cf.~\cref{CabelloArgument} with tolerance $\epsilon$.

\begin{figure}[h!tbp]
\captionsetup{justification=RaggedRight,singlelinecheck=off}
\centering
  \includegraphics[width=\linewidth]{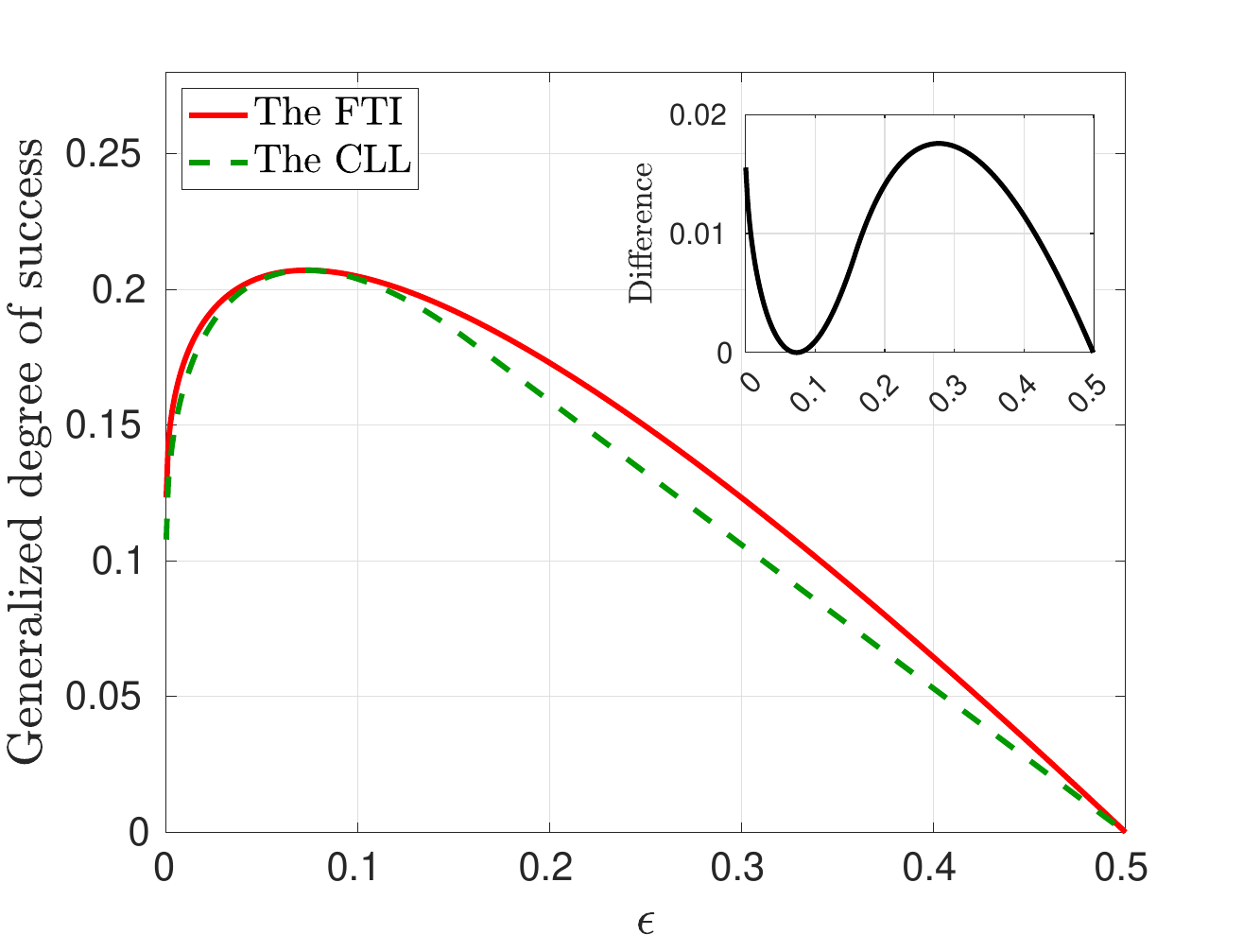}
\caption{Maximal generalized degree of success (MGDS) as a function of the deviation $\epsilon$ from the zero-probability constraints. The top curve (red, solid) shows a lower bound on the MGDS for our FTI argument (obtained by optimizing over two-qubit states and projective measurements); the middle curve (green, dashed) shows an upper bound on the MGDS for CLL's argument (see also \cite{Rai:PRA:2021}) obtained using level-3 of the semidefinite programming (SDP) hierarchy described in Ref.~\cite{Moroder13}. The inset highlights the difference between the two MGDS (provided by FTI's MGDS less the CLL's), which is always positive within a numerical precision of $10^{-7}$. Note that two probabilities coincide when $\epsilon \approx 0.0730$.
}
\label{fig:Compare_error}
\end{figure}

\section{Generalization of Hardy's proof beyond the simplest Bell scenario }\label{Sec:Hardy}

\begin{figure*}[h!tbp]
\captionsetup{justification=RaggedRight,singlelinecheck=off}
\centering
\begin{tikzpicture}
\node[rectangle,draw,minimum width = 1.5cm, 
    minimum height = 1.5cm] at (-1.9,5) {The ladder proof \cite{Boschi97}};
    \node[rectangle,draw,minimum width = 1.5cm, 
    minimum height = 1.5cm] at (-6,5) {Cereceda \cite{Cereceda16}};
    \node[rectangle,draw,minimum width = 1.5cm, 
    minimum height = 1.5cm] at (0,8) {Meng {\em et al.} \cite{Meng:PRA:2018}};
    \node[rectangle,draw,minimum width = 1.5cm, 
    minimum height = 1.5cm] at (1.4,5) {Chen {\em et al.} \cite{Chen:PRA:2013}};
    \node[rectangle,draw,minimum width = 1.5cm, 
    minimum height = 1.5cm] at (-6,8) {\bf Generalized CLL [\cref{Extend_Hardy}]};
    \node[rectangle,draw,minimum width = 1.5cm, 
    minimum height = 1.5cm] at (6,8) {\bf Generalized FTI [\cref{Extend_Stapp}]};
    \node[rectangle,draw,minimum width = 1.5cm, 
    minimum height = 1.5cm] at (-6,2) {Cabello-Liang-Li (CLL)~\cite{Liang03,Liang:2005vl,Cabello02}};
    \node[rectangle,draw,minimum width = 1.5cm, 
    minimum height = 1.5cm] at (0,2) {Hardy's paradox \cite{Hardy93}};
    \node[rectangle,draw,minimum width = 1.5cm, 
    minimum height = 1.5cm] at (6,2) {\bf The FTI [\cref{FTIArgument}]};
    \node[] at (-3.1,8.3) { $p=0$};
    \node[] at (-4.3,5.3) { $p=0$};
    \node[] at (3.2,8.3) { $r=0$};
    \node[] at (-5.5,6.5) { $d=2$};
    \node[] at (-5.5,3.5) { $k=2$};
    \node[] at (-1.6,6.5) { $d=2$};
    \node[] at (1.6,6.5) { $k=2$};
    \node[] at (7,5) { $d=k=2$};
    \node[] at (-1.6,3.5) { $k=2$};
    \node[] at (1.6,3.5) { $d=2$};
    \node[] at (-2.8,2.3) { $p=0$};
    \node[] at (3.2,2.3) { $r=0$};
    \draw[-Triangle,line width=1pt] (-3.8,8)--(-1.3,8);
    \draw[-Triangle,line width=1pt] (4,8)--(1.3,8);
    \draw[-Triangle,line width=1pt] (-6,7)--(-6,6);
    \draw[-Triangle,line width=1pt] (-1,7)--(-1,6);
    \draw[-Triangle,line width=1pt] (1,7)--(1,6);
    \draw[-Triangle,line width=1pt] (-4.8,5)--(-3.6,5);
    \draw[-Triangle,line width=1pt] (-6,4)--(-6,3);
    \draw[-Triangle,line width=1pt] (-1,4)--(-1,3);
    \draw[-Triangle,line width=1pt] (1,4)--(1,3);
    \draw[-Triangle,line width=1pt] (6,7)--(6,3);
    \draw[-Triangle,line width=1pt] (4.6,2)--(1.6,2);
    \draw[-Triangle,line width=1pt] (-3.7,2)--(-1.6,2);
\end{tikzpicture}
\caption{\label{Fig:Summary}Summary of the relationships between the various Hardy-type paradoxes discussed in this work. Our new constructions are printed in boldface.}
\end{figure*}
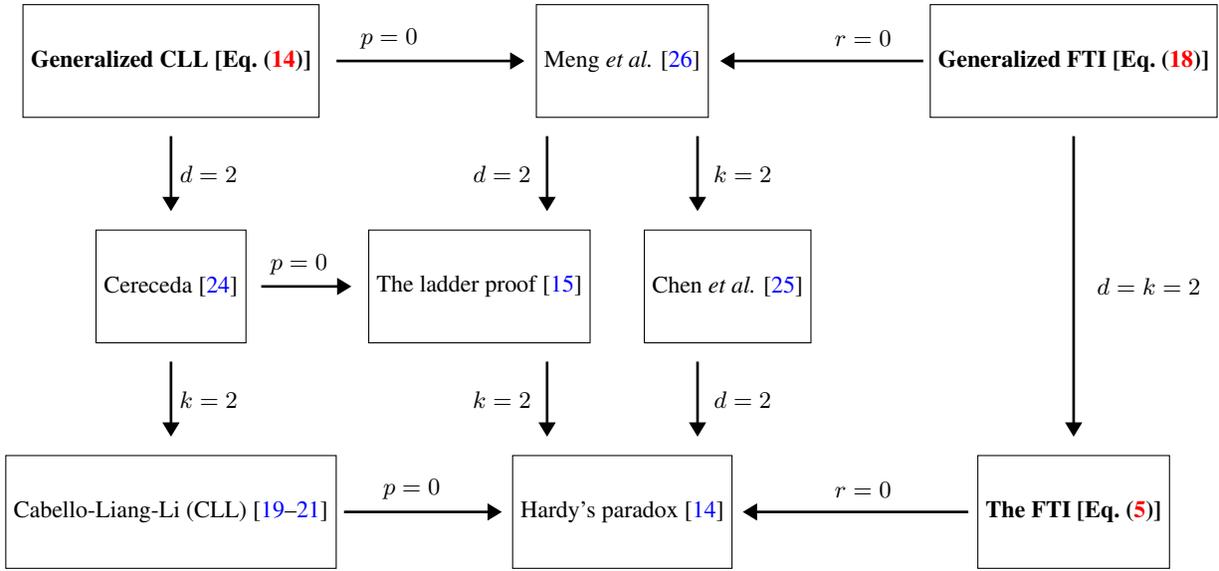

Having understood how Hardy and Hardy-type paradoxes work in the simplest Bell scenario, the time is now ripe to discuss their generalization to more complex Bell scenarios. In this section, we propose, respectively, a generalization of both the Hardy-type paradox of CLL, \cref{CabelloArgument}, and that based on FTI, \cref{FTIArgument}, to an {\em arbitrary} bipartite $k$-input $d$-output Bell scenario, i.e., one in which both party has a choice over $k$ alternative $d$-outcome measurements. 

\subsection{Generalization of CLL Hardy-type paradox}

Specifically, for the CLL Hardy-type paradox, the following conditions on the joint conditional probabilities:
\begin{subequations}\label{Extend_Hardy}
    \begin{align}
        &\left\{\begin{aligned}\label{Extend_Hardy_a}
        &P(A_{k-1}<B_{k-1})=q,~\text{if}~k\in \text{odd},\\
        &P(A_{k-1}>B_{k-1})=q,~\text{if}~k\in \text{even},
        \end{aligned}
        \right .\\
        &P(A_i>B_{i-1})=0,~\forall~ i\in \text{odd}\cap \{1,\cdots,k-1\},\label{Extend_Hardy_b}\\
        &P(A_i<B_{i-1})=0,~\forall~ i\in \text{even}\cap \{1,\cdots,k-1\},\label{Extend_Hardy_c}\\
        &P(A_{i-1}>B_i)=0,~\forall~ i\in \text{odd}\cap \{1,\cdots,k-1\},\label{Extend_Hardy_d}\\
        &P(A_{i-1}<B_i)=0,~\forall~ i\in \text{even}\cap \{1,\cdots,k-1\},\label{Extend_Hardy_e}\\
        &P(A_0<B_0)=p,\label{Extend_Hardy_f}
    \end{align}
\end{subequations}
together with $q>p$ define our generalization of this paradox, where the outcomes $A_x, B_y$ may take $d$ possible values, say, from $\{0,1,\cdots, d-1\}$. For the special case of $p=0$, one obtains a generalization of the original Hardy paradox to an {\em arbitrary} bipartite $k$-input $d$-output Bell scenario. If we further set $d=2$, then the construction reduces to one equivalent (under relabeling of inputs and outputs) to the ladder proof of nonlocality~\cite{Boschi97}. If, instead, we take $d=2$ in \cref{Extend_Hardy} without setting $p=0$, one obtains the argument briefly discussed in~\cite{Cereceda16}. All these relations are summarized in~\cref{Fig:Summary}.

\begin{figure}[h!tbp]
\captionsetup{justification=RaggedRight,singlelinecheck=off}
\centering
\begin{tikzpicture}
    \node[] at (0,5.5) {$A_{k-1}=s^A_{k-1}$};
    \node[] at (4,5.5) {$B_{k-1}=s^B_{k-1}$};
    \node[] at (0,4.25) {$A_{k-2}=s^A_{k-2}$};
    \node[] at (4,4.25) {$B_{k-2}=s^B_{k-2}$};
    \node[] at (0.2,1.25) {$A_1=s^A_{1}$};
    \node[] at (3.8,1.25) {$B_1=s^B_{1}$};
    \node[] at (0.2,0) {$A_0=s^A_{0}$};
    \node[] at (3.8,0) {$B_0=s^B_{0}$};
    \node[blue] at (2,5.75) {$q$};
    \node[red] at (2,-0.25) {$p$};
    \draw[dotted,line width=1pt] (0.5,2.25)--(0.5,3.25);
    \draw[dotted,line width=1pt] (3.5,2.25)--(3.5,3.25);
    \draw[dotted,line width=1pt] (2,2.5)--(2,3);
    \draw[Triangle-Triangle,blue,line width=1pt] (1,5.5)--(3,5.5);
    \draw[-Triangle,line width=1pt] (1,5.4)--(3,4.3);
    \draw[-Triangle,line width=1pt] (3,5.4)--(1,4.3);
    \draw[-Triangle,line width=1pt] (1,4.15)--(3,3.05);
    \draw[-Triangle,line width=1pt] (3,4.15)--(1,3.05);
    \draw[-Triangle,line width=1pt] (1,2.4)--(3,1.3);
    \draw[-Triangle,line width=1pt] (3,2.4)--(1,1.3);
    \draw[-Triangle,line width=1pt] (1,1.15)--(3,0.1);
    \draw[-Triangle,line width=1pt] (3,1.15)--(1,0.1);
    \draw[Triangle-Triangle,red,line width=1pt] (1,-0.075)--(3,-0.075);
\end{tikzpicture}
\caption{Logical structure of the generalized CLL arguments in the $k$-input $d$-output Bell scenario. The generalization of Hardy's original argument to these cases is recovered by setting $p=0$.}
\label{Figure_Extend_Hardy}
\end{figure}
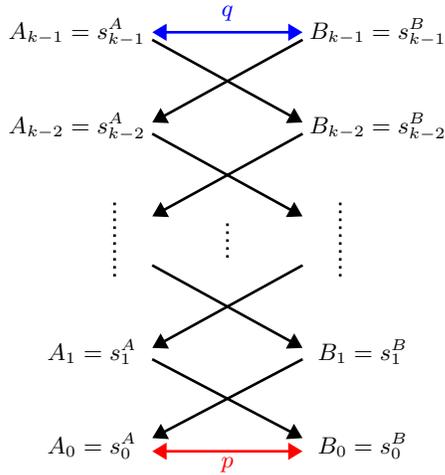

To see that the constraints of \cref{Extend_Hardy} with $q-p>0$ indeed constitute a proof of nonlocality without inequality, we begin by 
restricting our attention to a DLHVM where the measurement outcomes take definite values, denoted by $\{A_{i}=s^A_{i}\}$ and $\{B_{i}=s^B_{i}\}$, where $i\in\{0,1,\dots,k-1\}$. We depict the logical structure behind this argument schematically in \cref{Figure_Extend_Hardy}. 
Let us now consider the case of even and odd $k$ separately, starting with odd $k$. Then, in order for a DLHVM to reproduce~\cref{Extend_Hardy_a}, i.e., $P(A_{k-1}<B_{k-1})=q> 0$, the model must produce events $\{A_{i}=s^A_{i}\}$ and $\{B_{i}=s^B_{i}\}$ such that $s^A_{k-1}<s^B_{k-1}$. Similarly, the other constraints of \cref{Extend_Hardy} imply constraints on the relationship between $\{s^A_{i}\}$ and $\{s^B_{i}\}$, where $i\in\{0,1,\dots,k-1\}$. For example, together with the  conditions of \cref{Extend_Hardy_c} and \cref{Extend_Hardy_e} for $i=k-1$, i.e., $P(A_{k-1}<B_{k-2})=0$ and $P(A_{k-2}<B_{k-1})=0$, we get 
\begin{equation}
	s^B_{k-2}\le s^A_{k-1}<s^B_{k-1} \le s^A_{k-2}. 
\end{equation}
By considering the other zero-probability constraints one at a time for the remaining $i$, we arrive at
\begin{equation}\label{Chain_Extend_Hardy_odd}
    s^A_{0}\le \dots \le s^B_{k-2}\le s^A_{k-1}<s^B_{k-1} \le s^A_{k-2}\le \dots\le  s^B_{0}.
\end{equation}
This means that for any DLHVMs that give $s^A_{k-1}<s^B_{k-1}$, the constraints of \cref{Chain_Extend_Hardy_odd} imply that they must also give $s^A_0<s^B_0$. However, there can be other DLHVMs where $s^A_0<s^B_0$ holds even though $s^A_{k-1}\not<s^B_{k-1}$. Thus, for a general LHV model, the conditions of \cref{Chain_Extend_Hardy_odd} imply that $p\ge q$. In other words, a nonzero value of the DS $q-p$ witness Bell nonlocality without resorting to a Bell inequality. 

Similarly, for even $k$, starting from $P(A_{k-1}>B_{k-1})=q>0$ and by considering the other zero-probability constraints lead to, for any DLHVMs,
\begin{equation}\label{Chain_Extend_Hardy_even}
    s^A_{0}\le \dots \le s^A_{k-2}\le s^B_{k-1}<s^A_{k-1} \le s^B_{k-2}\le \dots\le  s^B_{0}.
\end{equation}
Again, this observation implies that $p\ge q\Leftrightarrow p-q\ge 0$ for any LHV model for all $k\ge 2$ and $d \ge 2$. 

\subsection{Generalization of the FTI-based Hardy-type paradox}

Next, let us describe our generalization of the FTI-based Hardy-type paradox from \cref{FTIArgument}, which consists of the following conditions:
\begin{subequations}\label{Extend_Stapp}
    \begin{align}
        &P(A_{k-1}<B_{k-1})=q,\label{Extend_Stapp_a}\\
        &P(A_i<B_{i-1})=0,~\forall~i\in \{1,\cdots,k-1\},\label{Extend_Stapp_b}\\
        &P(B_{i-1}<A_{i-1})=0,~\forall~ i\in \{1,\cdots,k-1\},\label{Extend_Stapp_c}\\
        &P(A_0<B_{k-1})=r\label{Extend_Stapp_d},
    \end{align}
\end{subequations}
and the requirement of $q>r$. The special case of $r=0$, which can be seen as a generalization of Stapp's argument~\cite{Stapp93}, has been proposed and discussed in \cite{Meng:PRA:2018}. To recover~\cref{FTIArgument} from \cref{Extend_Stapp}, one sets $k=d=2$ and apply the relabeling $A_i=0\leftrightarrow A_i=1$ for all $i\in\{0,1\}$. In \cref{Figure_Extend_Stapp}, we depict schematically the logical structure of this paradox.

As with our  explanation to \cref{Extend_Hardy}, for any  DLHVM satisfying $P(A_{k-1}<B_{k-1})=q>0$, the model must produce events $\{A_{i}=s^A_{i}\}$ and $\{B_{i}=s^B_{i}\}$ such that $s^A_{k-1}<s^B_{k-1}$. At the same time, the other inequality constraints from \cref{Extend_Stapp} imply $s^B_{k-2}\le s^A_{k-1}$, $s^A_{k-2}\le s^B_{k-2}$, etc., leading to
\begin{equation}\label{Chain_Extend_FTI}
    s^A_{0}\le s^B_{0}\le \dots \le s^A_{k-2}\le s^B_{k-2}\le s^A_{k-1} < s^B_{k-1},
\end{equation}
which implies $s^A_{0} < s^B_{k-1}$.
This means that with the zero-probability constraints, a DLHVM equipped with a strategy giving $s^A_{k-1}<s^B_{k-1}$ must also give $s^A_{0} < s^B_{k-1}$. Again, other DLHVM may give $s^A_{0} < s^B_{k-1}$ even though $s^A_{k-1}\not<s^B_{k-1}$. Thus, from \cref{Extend_Stapp_d}, we conclude that for a general LHV model (obtained by averaging over local deterministic strategies), we must have $r \ge q \Leftrightarrow r-q \ge 0$
for all $k, d\ge 2$.

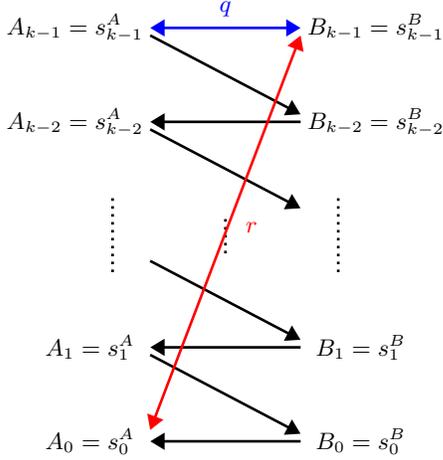
\begin{figure}[h!tbp]
\captionsetup{justification=RaggedRight,singlelinecheck=off}
\centering
\begin{tikzpicture}
    \node[] at (0,5.5) {$A_{k-1}=s^A_{k-1}$};
    \node[] at (4,5.5) {$B_{k-1}=s^B_{k-1}$};
    \node[] at (0,4.25) {$A_{k-2}=s^A_{k-2}$};
    \node[] at (4,4.25) {$B_{k-2}=s^B_{k-2}$};
    \node[] at (0.2,1.25) {$A_1=s^A_{1}$};
    \node[] at (3.8,1.25) {$B_1=s^B_{1}$};
    \node[] at (0.2,0) {$A_0=s^A_{0}$};
    \node[] at (3.8,0) {$B_0=s^B_{0}$};
    \node[blue] at (2,5.75) {$q$};
    \node[red] at (2.35,2.85) {$r$};
    \draw[dotted,line width=1pt] (0.5,2.25)--(0.5,3.25);
    \draw[dotted,line width=1pt] (3.5,2.25)--(3.5,3.25);
    \draw[dotted,line width=1pt] (2,2.5)--(2,3);
    \draw[Triangle-Triangle,blue,line width=1pt] (1,5.5)--(3,5.5);
    \draw[-Triangle,line width=1pt] (1,5.4)--(3,4.35);
    \draw[-Triangle,line width=1pt] (3,4.25)--(1,4.25);
    \draw[-Triangle,line width=1pt] (1,4.15)--(3,3.1);
    \draw[-Triangle,line width=1pt] (1,2.4)--(3,1.35);
    \draw[-Triangle,line width=1pt] (3,1.25)--(1,1.25);
    \draw[-Triangle,line width=1pt] (1,1.15)--(3,0.1);
    \draw[-Triangle,line width=1pt] (3,0)--(1,0);
    \draw[Triangle-Triangle,red,line width=1pt] (1,0.15)--(3,5.4);
\end{tikzpicture}
\caption{Logical structure of our FTI arguments in the $k$-input $d$-output Bell scenario. Generalized Stapp's arguments as introduced in~\cite{Meng:PRA:2018} are recovered by setting $r=0$.}
\label{Figure_Extend_Stapp}
\end{figure}

\subsection{Proof of equivalence of generalized Stapp's proof and generalized ladder proof of nonlocality}

Interestingly, the authors of \cite{Meng:PRA:2018} proved that in the $k$-input $2$-output scenario, the generalized Stapp's argument, \cref{Extend_Stapp} with $r=0$, and the ladder proof of nonlocality, cf. \cref{Extend_Hardy} with $p=0$, are equivalent. In other words, these two sets of conditions can be obtained from each other via an appropriate relabeling of inputs and outputs. In what follows, we show that this equivalence also holds for arbitrary $k,d\ge 2$.

\begin{theorem}\label{Thm:Equivalence}
For any symmetric bipartite Bell scenario, the set of conditions given in~\cref{Extend_Hardy} with $p=0$ is equivalent to the set of conditions of~\cref{Extend_Stapp} with $r = 0$.
\end{theorem}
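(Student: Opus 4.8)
The plan is to recognize that each of the two constraint sets encodes a single cyclic chain of inequalities on the $2k$ outcome variables $\{A_0,\dots,A_{k-1},B_0,\dots,B_{k-1}\}$, and then to exhibit an explicit relabeling of inputs and outputs that carries one chain onto the other.

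First I would set aside the deterministic reasoning that produced \cref{Chain_Extend_Hardy_odd}, \cref{Chain_Extend_Hardy_even} and \cref{Chain_Extend_FTI}, and instead read off, purely combinatorially, the graph whose nodes are the $2k$ variables and whose edges are the constraints. For \cref{Extend_Stapp} with $r=0$, the diagonal edges $\{A_j,B_j\}$ (from \cref{Extend_Stapp_a,Extend_Stapp_c}), the off-diagonal edges $\{A_{j+1},B_j\}$ (from \cref{Extend_Stapp_b}), and the closing edge $\{A_0,B_{k-1}\}$ (from \cref{Extend_Stapp_d}) together form a single Hamiltonian cycle $A_0 - B_0 - A_1 - B_1 - \cdots - A_{k-1} - B_{k-1} - A_0$ that alternates between the two parties, carries a consistent orientation induced by the $\le$/$<$ relations, and contains exactly one strict edge, namely the $q$-edge $A_{k-1}\to B_{k-1}$. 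I would check that \cref{Extend_Hardy} with $p=0$ likewise yields a single alternating $2k$-cycle with exactly one strict ($q$) edge: here the superdiagonal edges $\{A_{j-1},B_j\}$, the subdiagonal edges $\{A_j,B_{j-1}\}$, and the two diagonal edges $\{A_0,B_0\}$ and $\{A_{k-1},B_{k-1}\}$ do the job, reproducing the folded chains \cref{Chain_Extend_Hardy_odd,Chain_Extend_Hardy_even}. Thus proving \cref{Thm:Equivalence} reduces to matching two oriented alternating $2k$-cycles, each carrying a single marked edge.

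Next I would record how the admissible relabelings act on this picture: permuting Alice's (Bob's) inputs permutes the $A$-nodes (the $B$-nodes) among themselves, a \emph{global} reversal $a\mapsto d-1-a$ of every output reverses the cyclic orientation and flips the marked edge, and (since the scenario is symmetric) one may also interchange the two parties. Because the constraint graph is a single connected cycle, orientation flips that preserve the comparison form must be applied globally; a one-sided or setting-dependent reversal would turn a comparison $A_x<B_y$ into a sum constraint $A_x+B_y<d-1$ and is therefore not used. I would then give the matching constructively: after first applying a global output reversal (equivalently, a party swap) when $k$ is even, so that the marked edge is oriented from an $A$-node to a $B$-node, I traverse the cycle of \cref{Extend_Hardy} in the increasing direction starting from the $A$-node immediately following the marked edge, and define $\sigma_A$ (resp.\ $\sigma_B$) to relabel the $A$-nodes (resp.\ $B$-nodes) in their order of appearance. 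This turns the folded chain into $A_0\le B_0\le A_1\le\cdots\le A_{k-1}<B_{k-1}\le A_0$, i.e.\ exactly the cycle of \cref{Extend_Stapp} with $r=0$, with the marked edge landing on $A_{k-1}\to B_{k-1}$. Since the map is a bijection of edges sending the marked edge to the marked edge and every other edge to a zero-probability edge, it carries the CLL constraint set onto the FTI one; inverting the relabeling gives the converse, establishing the equivalence for all $k,d\ge 2$.

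The main obstacle is bookkeeping around the parity of $k$: the marked edge of \cref{Extend_Hardy} is $A_{k-1}\to B_{k-1}$ for odd $k$ but $B_{k-1}\to A_{k-1}$ for even $k$ (cf.\ \cref{Extend_Hardy_a}), so the two parities start from opposite orientations of the strict edge, and I must verify that the single normalization step (global reversal or party swap) reconciles them before the order-of-appearance relabeling is applied. The remaining care is purely combinatorial: confirming that ``start immediately after the marked edge and relabel in order of appearance'' simultaneously sorts both the $A$- and the $B$-indices into the in-phase interleaving of \cref{Extend_Stapp} and places the marked edge at the $(A_{k-1},B_{k-1})$ slot, rather than leaving a residual cyclic offset between the two index sets.
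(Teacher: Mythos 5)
Your proposal is correct, and it arrives at the same destination as the paper --- an explicit relabeling of inputs (plus a global output reversal when $k$ is even) --- but by a genuinely different derivation. The paper simply writes down the relabeling in closed form (index maps $x'=k-2\ell$ and $x'=k-1-2\ell$ over the sets $\mathscr{L}_{1k}$, $\mathscr{L}_{0k}$, with the outputs additionally sent to $d-1-A$, $d-1-B$ for even $k$) and then verifies family by family that the conditions of~\cref{Extend_Hardy} with $p=0$ land on those of~\cref{Extend_Stapp} with $r=0$. You instead abstract both constraint sets into oriented, party-alternating Hamiltonian $2k$-cycles with a single marked strict edge, and obtain the relabeling algorithmically as the order-of-appearance map after normalizing the marked edge's orientation. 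Because the oriented cycle with one marked edge has trivial automorphism group, your map is forced to coincide with the paper's --- and one can check it does, e.g.\ for $k=5$ both give (in the paper's convention that primes denote the CLL variables) $A'_0=A_2$, $A'_2=A_3$, $A'_3=A_0$, $B'_0=B_1$, $B'_1=B_2$, $B'_2=B_0$. What your framing buys is an explanation the paper's computation leaves implicit: \emph{why} such a relabeling exists and is essentially unique, why the alternation of parties guarantees no residual offset between the $A$- and $B$-index sets, and why the global output flip is needed precisely when $k$ is even (the marked edge of~\cref{Extend_Hardy_a} is then oriented $B\to A$). What the paper's closed-form version buys is immediate verifiability and explicit formulas. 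Two small points to tidy: the sum constraint produced by a one-sided reversal of $A_x<B_y$ is $A'_x+B_y>d-1$, not $<d-1$ (harmless, since this remark only motivates applying reversals globally); and your parenthetical ``equivalently, a party swap'' should be flagged as an alternative normalization available because the scenario is symmetric --- the paper's proof itself uses only input permutations and the global output reversal, which suffices and keeps the argument within relabelings of a single party's data.
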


\begin{proof}
Let us rewrite Alice's and Bob's measurement outcomes in~\cref{Extend_Hardy}, respectively, as $A'_{x'}$ and $B'_{y'}$, and let 
\begin{gather}\label{Eq:IntevalSets}
	\mathscr{L}_{1k-}:=\{1,2,\ldots, \lfloor\tfrac{k}{2}\rfloor-1\},\quad \mathscr{L}_{1k}:=\mathscr{L}_{1k-}\cup\{\lfloor\tfrac{k}{2}\rfloor\},\nonumber\\
	\mathscr{L}_{0k-}:=\{0\}\cup\mathscr{L}_{1k-},\quad \mathscr{L}_{0k}:=\mathscr{L}_{0k-}\cup\{\lfloor\tfrac{k}{2}\rfloor\}.
\end{gather}	

For {\em odd} $k$, one may verify that the following relabeling 
\begin{equation}
\begin{split}
	A'_{x'}=\left\{
	\begin{array}{ll}
	A_{\ell-1}, & x' = k-2\ell~\land~ \ell\in\mathscr{L}_{1k}\\
	A_{k-1-\ell}, & x' = k-1-2\ell~\land~\ell\in\mathscr{L}_{0k}
	\end{array}\right.,\\
	B'_{y'}=\left\{
	\begin{array}{ll}
	B_{k-1}, & y'=k-1\\
	B_{k-1-\ell}, & y' = k-2\ell~\land~\ell\in\mathscr{L}_{1k}\\
	B_{\ell-1}, & y' = k-1-2\ell~\land~ \ell\in\mathscr{L}_{1k}
	\end{array}\right.,\\
\end{split}
\end{equation}
transforms the $2k$ conditions of~\cref{Extend_Hardy} to those of~\cref{Extend_Stapp}. To see this, note that under this transformation, the condition of~\cref{Extend_Hardy_a} stays as
\begin{equation}
	P(A'_{k-1}<B'_{k-1}) = P(A_{k-1}<B_{k-1})=q, 
\end{equation}
which is~\cref{Extend_Stapp_a}. With the transformation, the conditions of~\cref{Extend_Hardy_b},~\cref{Extend_Hardy_d},~\cref{Extend_Hardy_c}, and~\cref{Extend_Hardy_f} with $p=0$, respectively, become the requirements that each of the following probabilities vanish:
\begin{align}\label{Eq:Transformed}
	&P(A'_{k-2\ell}>B'_{k-2\ell-1}) = P(A_{\ell-1}>B_{\ell-1}),\,\, \ell\in\mathscr{L}_{1k},\nonumber\\
	&P(A'_{k-2\ell-1}>B'_{k-2\ell}) = P(A_{k-1-\ell}>B_{k-1-\ell}),\,\, \ell\in\mathscr{L}_{1k};\nonumber\\
	&P(A'_{k-2\ell-1}<B'_{k-2\ell-2}) = P(A_{k-1-\ell}<B_{k-2-\ell}),\,\, \ell\in\mathscr{L}_{0k-};\nonumber\\
	&P(A'_{0}<B'_{0}) = P(A_{\frac{k-1}{2}}<B_{\frac{k-1}{2}-1}).
\end{align}
In addition, the conditions of~\cref{Extend_Hardy_e} become
\begin{align}\label{Eq:Transformed2}
	&P(A'_{k-2\ell-2}<B'_{k-2\ell-1}) = P(A_{\ell}<B_{\ell-1}),\,\, \ell\in\mathscr{L}_{1k-} \text{ and}\nonumber\\
	&P(A'_{k-2}<B'_{k-1}) = P(A_{0}<B_{k-1}),
\end{align}
To summarize, the requirement that the probabilities in the first two lines of~\cref{Eq:Transformed} vanish is identical to the condition of~\cref{Extend_Stapp_b}, the requirement that the probabilities in the last two lines of~\cref{Eq:Transformed} and the first line of ~\cref{Eq:Transformed2} vanish is identical to the condition of~\cref{Extend_Stapp_c}, and the requirement that the probability in the last line of~\cref{Eq:Transformed2} vanishes is identical to the condition of~\cref{Extend_Stapp_d}.

Similarly, for {\em even} $k$, one can verify that the following relabeling
\begin{equation}
\begin{split}
	A'_{x'}=\left\{
	\begin{array}{ll}
	d-1-A_{\ell-1}, & x' = k-2\ell~\land~ \ell\in\mathscr{L}_{1k}\\
	d-1-A_{k-1-\ell}, & x' = k-1-2\ell~\land~\ell\in\mathscr{L}_{0k}
	\end{array}\right.,\\
	B'_{y'}=\left\{
	\begin{array}{ll}
	d-1-B_{k-1}, & y'=k-1\\
	d-1-B_{k-1-\ell}, & y' = k-2\ell~\land~\ell\in\mathscr{L}_{1k}\\
	d-1-B_{\ell-1}, & y' = k-1-2\ell~\land~ \ell\in\mathscr{L}_{1k}
	\end{array}\right.,\\
\end{split}
\end{equation}
transforms the $2k$ conditions of~\cref{Extend_Hardy} to those of~\cref{Extend_Stapp}.
\end{proof}

For completeness, we show in \cref{Extend_Input} bounds on the maximal DS found for different Hardy-type paradoxes in several $k$-input $2$-output and $k$-input $3$-output Bell scenarios; analogous results for a larger number of outputs but with $k$ set to $2$ are shown in \cref{Extend_Output}. One thing worth noticing is that for all these numerical results, we observe that the DS from our FTI arguments is always higher than that obtained from all these other proposals.

\begin{table}[h!t]
\captionsetup{justification=RaggedRight}
\begin{subtable}[h]{0.5\textwidth}
        \centering
    \begin{tabular}{|c|c||c|c|c|c|c|}
    \hline
    \multicolumn{2}{|c||}{$k$} & $2$ & $3$ & $4$ & $5$ & $6$\\
    \hhline{|==::=|=|=|=|=|}
    \multirow{2}{*}{Boschi~\cite{Boschi97}} & UB & $0.09020$ & $0.17459$ & $0.23132$ & $0.27095$ & $0.29999$ \\
    \hhline{~-|-|-|-|-|-|}
    &LB& $0.09017^*$ & $0.17455$ & $0.23126$ & $0.27088$ & $0.29995$\\
    \hline
    \multirow{2}{*}{Cereceda~\cite{Cereceda16}} & UB & $0.10785$ & $0.18523$ & $0.23801$ & $0.27546$ & $0.30327$ \\
    \hhline{~-|-|-|-|-|-|}
    & LB & $0.10781^*$ & $0.18519$ & $0.23796$ & $0.27542$ & $0.30321$ \\
    \hline
    \multirow{2}{*}{FTI-based} & UB & $0.12501$ & $0.20713$ & $0.25976$ & $0.29579$ & $0.32192$ \\
    \hhline{~-|-|-|-|-|-|}
    & LB & $0.125^*$ & $0.20711$ & $0.25973$ & $0.29576$ & $0.32190$ \\
    \hline
\end{tabular}
\caption{Results for binary-output Bell scenarios. From top to bottom, we list the results for Hardy paradox given by the ladder proof of nonlocality without inequality~\cite{Boschi97} (see also~\cite{Meng:PRA:2018}), generalized CLL's argument due to~\cite{Cereceda16}, and our FTI-based formulation~[\cref{Extend_Stapp}].
The quantum strategies corresponding to the LBs are provided in~\cite{Supplemental}.
 Entries marked with * are known to be tight quantum bound.}\label{Extend_Input_2Output}
\end{subtable} 
\begin{subtable}[h]{0.5\textwidth}
        \centering
    \begin{tabular}{|c|c||c|c|c|c|}
    \hline
    \multicolumn{2}{|c||}{$k$} & $2$ & $3$ & $4$ & $5$\\
    \hhline{|==::=|=|=|=|}
    \multirow{2}{*}{Meng \cite{Meng:PRA:2018}}& UB & $0.14194$ & $0.26782$ & $0.34823$ & $0.40196$  \\
    \hhline{~-|-|-|-|-|}
    & LB & $ 0.14133$ & $0.26779$ & $0.34816$ & $0.40184$  \\
    \hline
    \multirow{2}{*}{CLL-type} & UB & $0.16794$ & $0.28272$ & $0.35706$ & $0.40763$  \\
    \hhline{~-|-|-|-|-|}
    & LB & $0.16791$ & $0.28265$ & $0.35698$ & $0.40753$  \\
    \hline
    \multirow{2}{*}{FTI-based} & UB & $0.19313$ & $0.31230$ & $0.38467$ & $0.43225$  \\
    \hhline{~-|-|-|-|-|}
    & LB & $0.19309$ & $0.31226$ & $0.38460$ & $0.43216$  \\
    \hline
\end{tabular}
\caption{Results for ternary-output Bell scenarios. From top to bottom, we list the results for generalized Hardy paradox given by Meng {\em et al.}~\cite{Meng:PRA:2018}, our generalized CLL formulation [\cref{Extend_Hardy}], and our FTI-based formulation [\cref{Extend_Stapp}]. The quantum strategies corresponding to the LBs are provided in~\cite{Supplemental}.
}\label{Extend_Input_3Output}
 \end{subtable}
    \caption{Comparison of the best upper bound (UB) and the best lower bound (LB) found on the DS for three different Hardy and Hardy-type paradoxes beyond the CHSH scenarios with $k\ge 2$ inputs. See also~\cref{Fig:Summary} for the relationship between all these paradoxes.
    The UBs were obtained by considering level-1\footnote{\label{level} Not all upper bounds reported here were obtained using the higher-level SDP hierarchy because some of these higher-level computations, due to numerical issues, resulted in worse upper bounds.} of the SDP hierarchy introduced in Ref.~\cite{Moroder13}. The LBs, obtained numerically, may be attained using the strategies provided in~\cite{Supplemental} (see also~\cref{App:Opt_Q}).
    }
    \label{Extend_Input}
\end{table}

\begin{table}
\captionsetup{justification=RaggedRight}
\centering
\scalebox{0.9}{\begin{tabular}{|c|c||c|c|c|c|c|c|}
    \hline
    \multicolumn{2}{|c||}{$d$} & $2$ & $3$ & $4$ & $5$ & $6$ &$7$\\
    \hhline{|==::=|=|=|=|=|=|}
    \multirow{2}{*}{Chen~\cite{Chen:PRA:2013}} & UB & $0.09020$ & $0.14194$ & $ 0.17659^\ddag$ & $0.20317^\ddag$ & $ 0.22441^\ddag$ & $ 0.24196^\ddag$\\
    \hhline{~-|-|-|-|-|-|-|}
    & LB & $0.09017^*$ & $0.14133$ & $ 0.17656$ & $ 0.20306$ & $ 0.22424$ & $0.24175$\\
    \hline
    \multirow{2}{*}{CLL-type} & UB  & $0.10785$ & $0.16794$ & $0.20890^\ddag$ & $ 0.23959^\ddag$ & $0.26392^\ddag$ & $0.28395^\ddag$\\
    \hhline{~-|-|-|-|-|-|-|}
    & LB  & $0.10781^*$ & $0.16791$ & $0.20883$ & $0.23948$ & $0.26378$ & $0.28378$\\
    \hline
    \multirow{2}{*}{FTI-based} & UB & $0.12501$ & $0.19313$ & $ 0.23844$ & $0.27176^\ddag$ & $0.29782^\ddag$ & $0.31904$\\
    \hhline{~-|-|-|-|-|-|-|}
    & LB & $0.125^*$ & $0.19309$ & $0.23839$ & $0.27175$ & $0.29773$ & $0.31880$\\
    \hline
\end{tabular}}
    \caption{Comparison of the best upper bound (UB) and the best lower bound (LB) found on the DS for the Hardy paradox due to~\cite{Chen:PRA:2013}, our generalized CLL formulation [\cref{Extend_Hardy}], and our FTI-based formulation [\cref{Extend_Stapp}] for Bell scenarios with two inputs and $d\ge 2$ outputs. See also~\cref{Fig:Summary} for the relationship between all these paradoxes.
    Most of the UBs were obtained by considering level-1 of the SDP hierarchy introduced in Ref.~\cite{Moroder13} but those marked with $^\ddag$ were obtained by considering level-2 of the SDP hierarchy introduced in Ref.~\cite{NPA,NPA2008}} (see also~\cref{level}). The LBs, obtained numerically, may be attained using the strategies provided in \cite{Supplemental} (see also~\cref{App:Opt_Q}). Entries marked with * are known to be tight quantum bound.
    
    \label{Extend_Output}
\end{table}

\section{Discussion}

Hardy and Hardy-type paradoxes are fascinating proofs of Bell nonlocality without resorting to Bell inequalities. Aside from fundamental interests (see, e.g.,~\cite{PRB:PRL:2014,Goh2018,Rai:PRA:2019,Chen23}), they are also known to be relevant in the task of randomness amplification~\cite{CR:NatPhys:2012} (see, e.g.,~\cite{Ravi18,ZRL+23}). In this work, we propose a Hardy-type paradox that can be naturally understood via the failure of the transitivity of implications (FTI), cf.~\cite{Stapp93,Liang:PRep}. 

As with the Hardy-type paradoxes formulated by Cabello-Liang-Li (CLL)~\cite{Cabello02,Liang:2005vl}, we show that a degree of success (DS) generalizing the notion of success probability---whose non-negative value witnesses Bell-nonlocality---may be introduced for the FTI-based Hardy-type paradox. In the simplest Bell scenario with two inputs and two outputs, we show that the new FTI-based formulations give the highest DS among all existing (i.e., Hardy, CLL, and FTI-based) formulations. Moreover, this advantage---in the form of a generalized DS---persists even when the zero-probability constraints required in all these formulations are relaxed. 

Then, we provide---as with~\cite{Meng:PRA:2018} for the original Hardy paradox---a generalization of the FTI-based formulation and the CLL-type formulation, to symmetric Bell scenarios involving an {\em arbitrary} number of inputs and outputs. In turn, this allows us to show that a ladder-type, cf.~\cite{Boschi97}, and an FTI-based proof of nonlocality without inequality are equivalent for an arbitrary symmetric Bell scenario, thereby generalizing the result of~\cite{Meng:PRA:2018} beyond the binary-outcome Bell scenarios.

For several simple Bell scenarios, we further observe (see \cref{Extend_Input} and \cref{Extend_Output}) numerically that our FTI-based generalizations provide the largest value of the DS. We do not currently have any concrete physical intuition behind this observation but it will be interesting to develop one in the future. Another natural question left open from the current work is to determine if this trend continues to hold for an arbitrary, symmetric Bell scenario. Also worth noting is that within each type of logical argument, the largest values of DS found appear to increase monotonically when one increases either the number of inputs $k$ or the number of outputs $d$ involved  --- an analytic proof of this observation will be more than welcome. 

On the other hand, given the close connection found~\cite{Rai:PRA:2022,Chen23,Liu:2309.06304} between the optimizing strategy for a Hardy paradox and its self-testing~\cite{Supic19} property, it would also be interesting to see if the optimizing correlations found for these new generalizations are also self-testing (and non-exposed~\cite{Goh2018}). From an application perspective, one may also be interested in the potential of such correlations for device-independent applications, especially in randomness amplification~\cite{CR:NatPhys:2012}, and proofs of Bell-nonlocality in the presence of measurement independence~\cite{PRB:PRL:2014}.

\begin{acknowledgements}
We thank Ashutosh Rai for helpful discussions and two anonymous reviewers for very helpful suggestions. This work is supported by the National Science and Technology Council, Taiwan (Grants No. 109-2112-M006-010-MY3, 112-2628-M006-007-MY4).
\end{acknowledgements}

\appendix

\section{Degree of Success versus Concurrence }\label{App:DSvsC}

For Hardy's argument, we can again take~\cref{Eq:Class2b} but now with the constraint~\cite{Chen23} 
\begin{equation}\label{Eq:HardyParametricConstraint}
	\tan\theta\sin\alpha = \tan\beta.
\end{equation}	
Hence, we again have the concurrence given by~\cref{Eq:Concurrence_Constraint}. Moreover, the success probability of \cref{HardyParadox} is 
\begin{equation}
    P_{\text{succ}}(\theta,\alpha,\beta)=(\cos\theta\cos\alpha\sin\beta)^2,
\end{equation}
subjected to \cref{Eq:HardyParametricConstraint}.
Rewriting $P_{\text{succ}}$ in terms of $C$ and $\beta$ and using variational techniques, the largest DS is obtained for $\cos^2\beta = \frac{C}{2-C}$, giving
\begin{equation}\label{Eq:Psucc_FixedTheta}
    P^{*}_{\text{succ}}(C)=\frac{C^2(1-C)}{(2-C)^2}.
\end{equation}

Similarly, for the CLL argument using the state and observables given in~\cite[Eqs. (40, 41)]{Chen23}, we have
\begin{equation}
    C(\ket{\psi}) = |\cos^2\phi \sin{2\theta}|
\end{equation}
and the degree of success
\begin{align}
    P_{\text{succ}}(\phi,\theta,\alpha,\beta) &=(\cos\phi \sin\theta\cos\alpha)^2  \nonumber \\
     &- (\sin\phi\cos\beta + \cos\phi\sin\theta\sin\beta)^2,
\end{align}
with constraint $(\tan\phi + \cos\theta\tan\alpha)\tan\beta = \sin\theta$.
To obtain the maximal DS for a given concurrence $C$, we may take $\alpha = \frac{\pi}{2} - \beta$ and use the constraint to eliminate $\phi$ and $\theta$, thus arriving at
\begin{align}\label{Eq:DS:C}
    P_{\text{succ}}(C,\beta) &= 
    (C-1)\cos^4{\beta} + \sqrt{2}\cos\beta\sin\beta \nonumber \\
    &\times \sqrt{(1-C)\cos^2{\beta}\left[1+C+(C-1)\cos{2\beta}\right]} \nonumber \\
    &\times \sin\left[\frac{1}{2}\sin^{-1}\left(\frac{C \sec^2{\beta}}{C+\tan^2{\beta}}\right)\right].
\end{align}
Using, e.g., Mathematica, we can numerically optimize $P_{\text{succ}}(C,\beta)$ for fixed values of $C$ and verify that the resulting plot matches the green dashed line in Fig.~\ref{fig:Compare_state}.

\section{Quantum Strategies}\label{App:Opt_Q}

In this Appendix, we give some further information about the quantum strategies that reproduce our best lower bound (LB) on the various DS shown in \cref{Extend_Input} and \cref{Extend_Output}. The actual quantum strategies for each case are available at~\cite{Supplemental}. For convenience, we refer to the various generalizations of Hardy~\cite{Hardy93} due to Boschi {\em et al.}~\cite{Boschi97}, Chen {\em et al.}~\cite{Chen:PRA:2013}, and Meng {\em et al.}~\cite{Meng:PRA:2018} as Hardy paradoxes. Indeed, in all these three cases, the DS is exactly the success probability of observing such a paradox.

Next, notice that the best quantum strategies we have found for the Hardy, CLL-type, and FTI-based arguments for a $k$-input, $d$-output Bell scenario, denoted by $(k,d)$, are always attained by performing real rank-1 projective measurements on a real pure quantum state $\ket{\psi}$ residing in the two-qudit Hilbert space $\mathbb{R}^d\otimes\mathbb{R}^d$. However, due to numerical imprecisions, the zero-probability constraints are {\em not} always strictly enforced in all these optimal strategies found. For completeness, we list in \cref{Tbl:Error} the largest deviation found among all the zero-probability constraints for each of these ``optimal" strategies. 

\begin{table}[h!]
\captionsetup{justification=RaggedRight}
  \begin{tabular}{|c|c|c|c|c|}\hline
    $(k,d)$  &  Hardy & CLL & FTI \\ \hline \hline 
	$(3, 2)$  &  $2.4057 \times 10^{-14}$ & 	$1.1211 \times 10^{-14}$		&  $9.7925 \times 10^{-16}$ \\ \hline
	$(4, 2)$ & $3.7182 \times 10^{-14}$ & 	$1.5057 \times 10^{-15}$		&  $7.1356 \times 10^{-16}$ \\ \hline
	$(5, 2)$  & $1.2657 \times 10^{-15}$ & 	$3.2069 \times 10^{-10}$		&  $1.5557 \times 10^{-12}$ \\ \hline
	$(6, 2)$  & $2.2401 \times 10^{-15}$ & 	$1.6756 \times 10^{-11}$		&  $4.0152 \times 10^{-12}$ \\ \hline \hline	
	$(2, 3)$  & $2.8391 \times 10^{-15}$ & 	$1.2214 \times 10^{-16}$		&  $2.9997 \times 10^{-16}$ \\ \hline
	$(3, 3)$  & $2.4822 \times 10^{-9}$ & 	$4.9471 \times 10^{-16}$		&  $3.8448 \times 10^{-10}$ \\ \hline
	$(4, 3)$  & $1.6744 \times 10^{-11}$ & 	$2.0745 \times 10^{-14}$		&  $4.8225 \times 10^{-14}$ \\ \hline
	$(5, 3)$  & $9.9972 \times 10^{-14}$ & 	$4.7729 \times 10^{-14}$		&  $4.8075 \times 10^{-12}$ \\ \hline \hline
	$(2, 4)$  & $8.0796 \times 10^{-8}$ & 	$1.8599 \times 10^{-16}$		&  $3.2093 \times 10^{-16}$ \\ \hline
	$(2, 5)$  & $3.5385 \times 10^{-10}$ & 	$2.3190 \times 10^{-14}$		&  $5.0784 \times 10^{-8}$ \\ \hline
	$(2, 6)$  & $4.3552 \times 10^{-10}$ & 	$4.7699 \times 10^{-11}$	&  $7.8787 \times 10^{-11}$ \\ \hline
	$(2, 7)$  &  $1.2624 \times 10^{-7}$ & 	$ 2.3396 \times 10^{-12}$		&  $5.9839 \times 10^{-8}$ \\ \hline
  \end{tabular}
   \caption{\label{Tbl:Error} Summary of the largest zero-probability-constraint violation for each optimal strategy~\cite{Supplemental} used to give the best lower bound on the DS presented in~\cref{Extend_Input,Extend_Output}.}
\end{table}

\end{CJK*}

\end{document}